\newcommand{\codeinmath}[1]{\textrm{{\tt #1}}} 
\newcommand{\code}[1]{\textrm{{\tt #1}}}
\newcommand{\caseof}[2]{\code{case} \ #1 \ \code{of} \ #2 }
\def\evalcont{\mathbb{C}}
\newcommand{\readTVarONLY}{\ensuremath{\code{readTVar}}}
\newcommand{\readTVar}[1]{\ensuremath{\readTVarONLY \ #1}}
\newcommand{\writeTVarONLY}{\ensuremath{\code{writeTVar}}}
\newcommand{\writeTVar}[2]{\ensuremath{\writeTVarONLY \ #1 \ #2}}
\newcommand{\bind}{\ensuremath{\code{>>=}}}
\newcommand{\bindTM}{\ \ensuremath{\code{>>=}} \ }
\newcommand{\returnTMONLY}{\ensuremath{\code{return}}}
\newcommand{\returnTM}[1]{\ensuremath{\returnTMONLY \ #1}}
\newcommand{\pt}{\ensuremath{\mathcal{T}}}
\def\evalcont{\mathbb{C}}
\newtheorem{definition}{Definition}
\newtheorem{theorem}{Theorem}
\newtheorem{corollary}{Corollary}
\title{Static Application-Level Race Detection in STM Haskell using Contracts}
\author{Romain Demeyer
\institute{University of Namur, Belgium}
\email{romain.demeyer@unamur.be} 
\and
Wim Vanhoof
\institute{University of Namur, Belgium}
\email{wim.vanhoof@unamur.be}}
\begin{document}
\maketitle

\begin{abstract} 
Writing concurrent programs is a hard task, even when using high-level synchronization primitives such as transactional memories together with a functional language with well-controlled side-effects such as Haskell, because the interferences generated by the processes to each other can occur at different levels and in a very subtle way. 
The problem occurs when a thread leaves or exposes the shared data in an inconsistent state with respect to the application logic or the real meaning of the data.  
In this paper, we propose to associate {\em contracts} to transactions and we define a program transformation that makes it possible to extend \emph{static contract checking} in the context of STM Haskell.  
As a result, we are able to check statically that each transaction of a STM Haskell program handles the shared data in a such way that a given consistency property, expressed in the form of a user-defined boolean function, is preserved. 
This ensures that bad interference will not occur during the execution of the concurrent program. 
\end{abstract}

\section{Introduction}
\label{introduction}

\emph{Software Transactional Memory} (STM) \cite{software-transactional-memory} is supposed to help us in the complex task of writing concurrent programs. 
The pure and lazy functional language Haskell 
 proposes a particularly clean and straightforward integration of STM \cite{beautiful-concurrency,transactional-memory-data,composable-memory-transactions} in its concurrent version. 
Shared variables, called \emph{transactional variables} (\emph{TVar}s) in the context of STM, can be accessed by different threads using the STM primitives \texttt{readTVar} and \texttt{writeTVar}, and the programmer can protect those accesses from the interference of other threads by making them (conceptually) atomic using the primitive \texttt{atomically}.
In fact, \emph{TVar}s can \emph{only} be accessed from within such atomic blocks, also called \emph{transactions}.
While the use of STM Haskell  allows to avoid many tricky low-level bugs, such as low-level race conditions and deadlocks, this in itself is not an absolute guarantee for correctness \cite{beautiful-concurrency}. 
Indeed, in spite of STM being a beautiful tool that allows one to get rid of low-level locking mechanisms and to focus on higher-level aspects of the program, it does not prevent all errors related to concurrent programming. In particular, a fundamental difficulty related to concurrent programming with shared variables remains: the risk of exposing those data in an inconsistent state \emph{with respect to the application logic} \cite{verifying-correct-usage}.

Let us illustrate this problem on a simple example. Consider the source code implementing a function \texttt{f} presented in the top part of Fig.~\ref{exempleintro}, where the {\tt do} notation refers to the classic syntactic sugar to express monadic computations \cite{transactional-memory-data,report-programming-language}.   
We suppose the existence of two \emph{TVar}s: \texttt{shSum}, which stores an integer, and \texttt{shTab} which stores a list of integers.
There are two nested functions in the main function {\tt f}.
The first one, \texttt{addTab}, consists of two STM operations that allows to update \texttt{shTab} by adding an integer {\tt n} in front of the list\footnote{Let \texttt{x} be an integer and \texttt{xs} be a list of integers $\mathtt{[x_1,...,x_n]}$, \texttt{(x:xs)} correspond to the list $\mathtt{[x,x_1,...,x_n]}$.}. The second one, \texttt{addSum}, updates the other \emph{TVar}, \texttt{shSum}, by adding {\tt n} to its value. 
When considering this definition, each update is protected by the operation {\tt atomically}, and, consequently, this code is free of low-level race conditions (i.e. concurrent access to the same data element with at least one access being a write). 
This is both sufficient and efficient, as long as the values of the \emph{TVar}s are independent. 
However, if there is an implicit link between the values of those variables, the story is more subtle. Suppose that {\tt shSum} is meant to represent at all times the sum of the integers in the list {\tt shTab}.
In this case, an inconsistent state (in which the value of {\tt shTab} has been updated while the value of {\tt shSum} has not) is exposed {\em between} the two transactions, which may be problematic in a concurrent program. 
Indeed, suppose for example that the thread's execution is (conceptually) suspended at this precise point, while another thread doubles the value of {\tt shSum} and that of each integer of {\tt shTab}. 
At the end of the execution of both threads, the sum of the integers in the list {\tt shTab} and the value of {\tt shSum} will be different, breaking the programmer's intention and thus showing an unacceptable error in the program. 
This situation is what is sometimes called an {\em application-level race condition} \cite{verifying-correct-usage,rdemeyer:framework-verifying-application}, as it represents an inconsistency with respect to the {\em logic} of the application that cannot be observed from the source code alone. 
In the context of our example, it can be easily corrected by encapsulating both updates in a single transaction, as depicted by the alternative implementation of \texttt{f} sketched at the bottom part of Fig.~\ref{exempleintro}.

\begin{figure}[htb] 
\centering
    \begin{footnotesize}
\begin{alltt}
f n = let addTab n = do tab <- readTVar shTab
                        writeTVar shTab (n:tab)
          addSum n = do s <- readTVar shSum
                        writeTVar shSum (s+n)
          in do atomically ( addTab n )     
                atomically ( addSum n )     
\end{alltt}
    \end{footnotesize}
\begin{center}
\line(1,0){100}
\end{center}
    \begin{footnotesize}
\begin{alltt}
f n = let ...
          in do atomically ( do addTab n    
                                addSum n )  
\end{alltt}
    \end{footnotesize}

\caption{The function \texttt{f} implemented with two transactions (top part) and with a single transaction (bottom part)}
\label{exempleintro}
\end{figure}
 
As application-level race conditions appear often and in a more subtle form in large programs \cite{learning-mistakes-comprehensive} and as they are hard to prevent with testing, there is a certain interest in having a tool that is able to detect them statically. This boils down to verifying that each transaction preserves the \emph{TVar}s in a state that is consistent with respect to the given application logic. 

One interesting approach towards specification and verification of program properties of Haskell programs is so-called \emph{static contract checking} \cite{static-contract-checking} which has been developed for a core version of the language. Its convenience lies in the fact that the property to be checked can be specified by writing it in the form of a Haskell function, which liberates the programmer from the need of dealing with a separate specification language \cite{static-contract-checking}. Unfortunately, not being designed to handle concurrent programs, the technique does not handle mutable states nor transactions. The goal of this paper is to overcome this limitation.

More specifically, we make the following contributions:
\begin{enumerate}
\item We define contract checking for the language used in the transactions of STM Haskell programs. For this purpose, we have defined a novel kind of contract, dedicated to STM operations. 
\item We re-express the problem of the detection of application-level race conditions in the context of contract checking.
\item We propose a practical sound method to prove automatically contract satisfaction.
\end{enumerate}
The method we propose in order to achieve that last goal is to transform expressions and contracts in a such form that an existing verification technique, introduced in Section~\ref{background}, can be used.
Our framework is presented in broad terms with the help of a motivating example in Section~\ref{idea}. Then, we define formally the framework, we prove our transformation to be correct and we discuss extensions and limitations in Section~\ref{details},  
discuss how we could overcome some current limitations of our approach, 
before replacing our results in the context of related work (Section~\ref{conclusion}).

\section{Background: Static Contract Checking for Haskell}
\label{background}

The static verification framework of Xu et al.\cite{static-contract-checking} is based on the notion of \emph{contracts} \cite{eiffel-language,spec-programming-system}. 
A contract can be seen as a refinement of the type of a function. 
For example, let us consider the function \texttt{inc} depicted in Fig~\ref{fig:inc}. 
The type of the function tells us that it takes an integer as argument and returns an integer as well.
The contract gives more information about the function by telling us that the integer expected as argument has to be strictly positive,
and that the value returned has to be strictly greater than the argument. 

\begin{figure}[htb]
  \centering
  \begin{footnotesize}
\begin{verbatim}
inc :: Int -> Int                      -- Type
inc :: { x | x > 0 } -> { r | r > x }  -- Contract
inc x = x + 1                          -- Definition
\end{verbatim}
  \end{footnotesize}
  \caption{The \texttt{inc} function, with its type and contract.}
  \label{fig:inc}
\end{figure}

In the context of this example, contract \emph{checking} consists then in verifying that if the argument fulfills its part of the contract, then the value returned fulfills its own. In the framework of~\cite{static-contract-checking}, this checking is done in two steps. First, the contracts are integrated into the function definition in a such way that the function explicitly fails by returning a special value if they are not fulfilled. 
The function transformed in this way is called the \emph{wrapped function}.
Secondly, symbolic execution is used to check whether the wrapped function \emph{can} effectively fail. For our example, the wrapped function is depicted in Fig~\ref{fig:inc:step1}. 
The outer case expression represents the fact that we assume a strictly positive value for the argument, i.e. the opposite branch is explicitly tagged as unreachable (by returning the special value \texttt{UNR}). The inner case expression represents the fact that the function must fail (by returning the special value \texttt{BAD}) if the returned value is not greater than the input. In other words, \texttt{inc'} behaves just like \texttt{inc}  except that it returns \texttt{UNR} if its argument is negative and \texttt{BAD} if the function definition violates the contract, i.e. returns a value that is no strictly greater than the input.

The second step consists in simplifying the wrapped function in order to prove that all \texttt{BAD} branches can be safely removed, 
which is quite easy in this example with simple symbolic execution and basic theorem proving which can replace \texttt{x + 1 > x} by \texttt{True} \cite{static-contract-checking}.
Note that when the function is \emph{called}  
in the program, the call in question is also replaced by a wrapped call, similar to the wrapped function definition apart from the fact that \texttt{BAD} and \texttt{UNR} are swapped. 
This depicts the fact that checking a function definition corresponds to verifying the postcondition, assuming that the precondition holds, while checking a function call corresponds to verifying that the precondition holds and then, if so, assuming that the postconditions holds.
This method allows to verify entire programs in a modular way and to deal adequately with recursion \cite{static-contract-checking}.

\begin{figure}[htb]
  \centering
  \begin{footnotesize}
\begin{alltt}
inc' x = case x > 0 of 
           True -> case x + 1 > x of
                     True  -> x+1
                     False -> BAD
           False -> UNR 
\end{alltt}
  \end{footnotesize}
  \caption{The function that is build based on \texttt{inc} and its contracts.}
  \label{fig:inc:step1}
\end{figure}

Contract checking of a function being a undecidable problem, it has three possible outcomes: either the function is definitely safe (all \texttt{BAD}s are removed during simplification and hence proven to be unreachable), either definitely not safe (the expression does simplify to \texttt{BAD}), or unknown (some \texttt{BAD}s remain present after simplification, but we cannot prove that the expression will actually fail). However, by using a suitable inlining/simplification
strategy, a considerable amount of programs can be proven to be correct with respect to their given contracts~\cite{static-contract-checking}. As an example, our own prototype implementation based on~\cite{static-contract-checking} succeeds in the verification of the somewhat more involved example represented in Fig.~\ref{fig:addpure}.

\begin{figure}[htb]
  \centering
  \begin{footnotesize}
\begin{verbatim}
add :: Int -> ([Int],Int) -> ([Int],Int)
add :: { x | True} -> { (tab,s) | sum tab == s} -> { (tab,s) | sum tab == s} 
add n (tab,s) = (n:tab,s+n)

sum :: [Int] -> Int
sum xs = case xs of []     -> 0
                    (l:ls) -> l + sum ls
\end{verbatim}
  \end{footnotesize}
  \caption{The \texttt{add} function and its type and contract.}
  \label{fig:addpure}
\end{figure}

\section{The Main Idea}
\label{idea}

In this section, we present in an intuitive fashion 
a framework
that allows to statically detect application-level race conditions in a STM Haskell program. It consists, in other words, in verifying that each transaction preserves the transactional variables (\emph{TVar}s) in a consistent state.
The main idea behind the verification is to deduce contracts from the properties expressing consistency of the \emph{TVar}s, and to subsequently verify these contracts by an adaptation of the static contract checking framework in order to make it deal with STM Haskell.
The basic difficulty in using the framework of Xu et al.~\cite{static-contract-checking} is that the latter is not designed to deal with \emph{mutable} variables and side-effects, i.e. STM and I/O primitives. However, since transactions in STM Haskell do not produce side effects other than updating the values of some \emph{TVar}s, they can be seen as pure functions taking the values of the \emph{TVar}s they manipulate as input, and producing a set of new values for them. 
To illustrate our approach, we will show how it is capable of detecting an inconsistency in the function of the top part of Fig.~\ref{exempleintro}, while it proves the alternative function (sketched at the bottom part) to be application-level race condition free.

As a first step, one needs to \emph{specify} what it means for the \emph{TVar}s to be in a consistent state. This can be done by writing a Haskell function 
that returns \texttt{True}, respectively \texttt{False}, if the set of \emph{TVar}s are in a consistent, resp. inconsistent, state.  
In the context of our running example, this function 
would be as follows:

\begin{footnotesize}
\begin{verbatim}
inv (shTab,shSum) = sum shTab == shSum
  where sum xs = case xs of []     -> 0
                            (l:ls) -> l + sum ls
\end{verbatim}
\end{footnotesize}
Indeed, in our program, the \emph{TVar}s are considered to be in a consistent state if the sum of the elements from the list stored in \texttt{shTab} equals the integer stored in \texttt{shSum}.  
Note that the above expression is the \textit{only} information that needs to be specified for our approach to be capable of verifying the absence of violations of this consistency definition. Moreover, our framework allows this function to be \textit{any} Haskell function that returns a boolean, including functions whose definition involves calls to recursive functions that are defined elsewhere in the program. 

From the above function, we generate the following \emph{STM contract}, that we call the \emph{transactional invariant}:

\begin{footnotesize}
\begin{verbatim}
INVARIANT :: || c <> c || Any     where c = {(shTab,shSum) | inv(shTab,shSum)} 
\end{verbatim} 
\end{footnotesize}
While the language of the contracts will be formally defined further down the paper, intuitively the above contract specifies that if the \emph{TVar}s are in a consistent state at the very beginning of the transaction -- i.e. their content satisfies the contract \texttt{c} at the left of \texttt{<>}, then, it must also be \texttt{True} at the very end of the transaction -- i.e. their content satisfies the contract \texttt{c} at the right of \texttt{<>}. Formally, like any Haskell expression, also a transaction in STM Haskell returns a value, but in the example we don't care about it -- hence the \texttt{Any} in the contract. 

To achieve the verification of this contract, we define an operator, which we denote by  $\pt$, that transforms a STM Haskell expression $e$, i.e. an expression which involves mutable variables and STM primitives, into a basic non-concurrent Haskell  expression $\pt(e)$ -- i.e. an expression which is completely \emph{pure} -- in such a way that the contracts can be checked on $\pt(e)$ by the non-concurrent framework of \cite{static-contract-checking}, while the results of the analysis are valid for the contracts in the original concurrent program $e$.
The intuitive idea behind the transformation is to represent the effect of a transaction on the \emph{TVar}s by a pure function (a lambda abstraction) that takes as arguments not only the potential free variables of the transaction, but also the values of those \emph{TVar}s as input, and that computes a vector containing the value computed by the transaction \emph{and} the values of the (updated) \emph{TVar}s. 

\newcommand\bslash{\symbol{`\\}}

In our example, we transform the three transactions from Fig.~\ref{exempleintro} into the three following lambda expressions:

\begin{footnotesize}
\begin{alltt}
{\bslash}n (shTab,shSum) -> ((),(n:shTab,shSum))    
{\bslash}n (shTab,shSum) -> ((),(shTab,shSum+n))    
\end{alltt}
\end{footnotesize}
\begin{center}
\line(1,0){100}
\end{center}
\begin{footnotesize}
\begin{alltt}
{\bslash}n (shTab,shSum) -> ((),(n:shTab,shSum+n))  
\end{alltt}
\end{footnotesize}
Indeed, for each lambda expressions, the arguments are \texttt{n}, which is the only free variable in the transactions, and \texttt{(shTab,shSum)}, which is a couple representing, intuitively, the value of the \emph{TVar}s at the beginning of the transaction. The lambda expressions express how \emph{TVar}s are updated with respect to those arguments.   
We can also transform the transactional invariant into a (pure) function contract
which bears no reference to STM:

\begin{footnotesize}
\begin{verbatim}
INVARIANT :: Ok -> c -> (Any,c)     where c = {(shTab,shSum) | inv(shTab,shSum)} 
\end{verbatim}
\end{footnotesize}
Intuitively, an expression $e$ satisfies this contract if, when two arguments are applied to $e$ such that the first one does not crash, i.e. hence the \texttt{Ok} that will be defined in Section \ref{details}, and the second one satisfies $c$, then it produces a couple of elements such that the second one also satisfies $c$.  
Verifying whether this contract is satisfied by the three transformed transactions is then an instance of standard contract checking for pure expressions with respect to pure contracts \cite{static-contract-checking}. The verification will prove failure for the two first transactions and success for the third one. 
These results are easily transposable to the original concurrent functions as they state that the function at the top part does not preserve the consistency of the \emph{TVar}s (indicating an application-level race conditions) whereas the one at the bottom part does.

\section{Our Framework in Detail}
\label{details}
In this section, we present a core language based on STM Haskell, denoted $\mathcal{H}$, the language of {\em contracts}, and we formally develop the $\pt$-operator that allows to transform expressions and contracts such that they can be checked by standard contract checking.

\subsection{The language}
\label{language}

A STM Haskell program can be seen as a series of I/O operations. 
Among the different kinds of I/O operations (reading/writing a file, creating a thread,...) is the \texttt{atomically} operation which allows to perform a series of STM operations in a conceptually atomic way with respect to the other threads. Such a series of STM operations embedded in an \texttt{atomically} operation is called a \emph{transaction}. 
STM operations consist basically in reading and updating transactional variables (\emph{TVar}s), and the \emph{only} way to perform these is from within an \texttt{atomically} operation (a fact that is guaranteed by the type system of STM Haskell). Performing I/O operations is \emph{not} allowed inside transactions.

Fig.~\ref{syntax-expression} presents the syntax of $\mathcal{H}$, the core language used for writing a STM Haskell transaction. As such, the language is only a subset of the one defined in~\cite{composable-memory-transactions} and~\cite{compositional-theory-stm} but it allows to define all parts of a STM Haskell program that can be used from within a transaction, including STM operations, lambda abstractions and (recursive) function definitions. Note that $\mathcal{H}$ does not contain an \texttt{atomically} primitive, as the latter is an IO operation and IO operations are not permitted within a transaction. Consequently, transactions cannot be nested in STM Haskell.
For the sake of clarity, we consider that the set of function symbols ($\mathcal{F}$), lambda variables ($\mathcal{X}$), \emph{TVar}s ($\mathcal{V}$) and data constructors ($\mathcal{K}$) underlying a program are finite sets that are pairwise distinct. 
Moreover, we suppose that $\mathcal{V} = \{ t_1, ... ,t_n \}$ is a totally ordered set. We also consider the existence of a mapping $\Delta$ from function names of $\mathcal{F}$ to expressions. In other words, $\Delta$ contains those function definitions that can be called from within a transaction.

\begin{figure*}[htb]
  \centering
\begin{math}
    \begin{array}{lcllrrr}


      f   & \in & \textbf{Function Names} \ (\mathcal{F}) \ \smallskip \\

      x,y & \in & \textbf{Lambda Variables} \ (\mathcal{X}) \smallskip \\ 

      K   & \in & \textbf{Data Constructors} \ (\mathcal{K}) \smallskip \\

      t   & \in & \textbf{Transactional Variables} \ (\mathcal{V}) \smallskip \\

      e,p & \in & \textbf{Exp}              & \mbox{\textbf{Expressions}} \\ 
      e,p & ::= & x                        & \mbox{variable}                &  & & [E1] \\
          &  |  & K \ \overline{e}         & \mbox{constructor}             &  &\mbox{(value)} & [E2]\\
          &  |  & \lambda x.e              & \mbox{lambda abstraction}      & &\mbox{(value)} & [E3]\\
          &  |  & r                        & \mbox{exception}               &  &\mbox{(value)} & [E4]\\
          &  |  & e_1 \ e_2                & \mbox{application}             &  && [E5]\\
          &  |  & f                        & \mbox{function call}           & && [E6]\\
          &  |  & \caseof{e}{\{alt_1...alt_n\}} & \mbox{case-expression}    &  && [E7]\\ 
          &  |  & \readTVar{t}             & \mbox{STM read variable}       &  && [E8]\\
          &  |  & \writeTVar{t}{e}         & \mbox{STM write variable}      &  && [E9]\\
          &  |  & e_1 \bindTM e_2          & \mbox{STM bind}                    &  && [E10]\\
          &  |  & \returnTM{e}             & \mbox{STM return}                  &  &\mbox{(value)}& [E11]\smallskip \\ 

      r   & \in & \textbf{Exceptions} \\
      r   & ::= & \code{BAD} \\ 
          &  |  & \code{UNR} \smallskip \\

      alt & ::= & K \ \overline{x} -> e \smallskip \\


    \end{array}
  \end{math}
\caption{Syntax of $\mathcal{H}$ expressions}
\label{syntax-expression}
\end{figure*}

A first kind of expressions, which we call \emph{pure expressions} are those constituted by repeated application of only the rules $E1$ - $E7$. They correspond to the language defined in \cite{static-contract-checking}, being a classical functional language based on construction ($E2$), lambda abstraction ($E3$), application ($E5$), function calls ($E6$), and case expression ($E7$). Note the presence of \emph{exceptions} ($E4$) -- basically being the zero-arity predefined constructors \texttt{BAD} and \texttt{UNR}.
A second kind of expressions, which we will call \emph{STM expressions} are those that involve at least one application of a rule among $E8$ - $E11$. 
Retrieval of the content of a TVar ($E8$), updating the content of a TVar ($E9$), binding two STM expressions ($E10$) and defining the return expression ($E11$) are the main operations we can find in a STM expression.

A type system exists for 
STM expressions \cite{compositional-theory-stm} and we will consequently suppose dealing only with well-typed expressions. 
For any expression $e$ we will denote by $e$\texttt{::a} the fact that the expression is of type \texttt{a}. 
In particular, a STM expression having an outermost redex of the form $E8$, $E9$, $E10$ or $E11$ is of type \texttt{STM a} where \texttt{a} is the type of the expression returned when evaluating the STM expression, and a TVar as being of type \texttt{TVar a} where \texttt{a} is the type of the expression that we can store in this TVar. We will refer to STM expressions of type \texttt{STM a} as \emph{STM operations}. 
We will also suppose implicitly that all considered expressions are closed, i.e. there is no free variable, and \emph{well-formed}, that is to say that wherever $\writeTVar{t}{e}$ or $\returnTM{e}$ appear in an expression, $e$ is a pure expression.

In what follows, we suppose certain types and constructors given. Among them the type \texttt{Bool} defining the zero-arity constructors \texttt{True} and \texttt{False}, i.e.  $\{\mathtt{True},\mathtt{False}\}\in \mathcal{K}$. In the examples, we will furthermore use integers and lists, the latter being defined using two constructors: the zero-arity constructor \texttt{[]} to represent empty list and the binary constructor \texttt{(:)} which is often used in an infix way, i.e. \texttt{x:xs} is the list obtained by adding \texttt{x} in front of the list \texttt{xs}. Finally, to enhance readability of the examples, we will sometimes use the convenient so-called \texttt{do}-notation \cite{composable-memory-transactions} and \texttt{let}-notation as a syntactic sugar:

\begin{center}
\begin{math}
  \begin{array}{rcl}
    \mathtt{let } \ x \ \mathtt{ = } \ e' \ \mathtt{ in } \ e \  & \equiv & (\lambda x . e) \ e' \smallskip \\ 
    \mathtt{do \{ }x \codeinmath{<-} e\mathtt{;} S \mathtt{\}} & \equiv & e \ \bind \ (\lambda x . \mathtt{do \{}S\mathtt{\}}) \\
     \mathtt{do \{ } e \mathtt{;} S \mathtt{\}} & \equiv & e \ \bind \ (\lambda \_ . \mathtt{do \{}S\mathtt{\}}) \\ 
    \mathtt{do \{ }  e \mathtt{\}} & \equiv & e \\ 
\end{array}
\end{math}
\end{center}

\newcommand{\cpl}[1]{\ensuremath{\langle #1 \rangle}}

\begin{figure*}[htb]
  \centering
\fbox{\begin{minipage}{0.7\textwidth}
  $\langle (\lambda x.e_1) \ e_2 , \sigma \rangle \rightarrow \langle e_1[e_2/x] , \sigma \rangle \ \ \ (APP) \ \ \ \ \ \ $
  $\cfrac{f = e \in pgm}{\langle f, \sigma \rangle \rightarrow \langle e, \sigma \rangle} \ \ \ (CALL) $ \medskip \\ 
  $\langle \caseof{K_i \ \overline{e_i}}{\{... , K_i \ \overline{x_i} -> e, ...\}} , \sigma \rangle 
  \rightarrow \langle e[e_i/x_i], \sigma \rangle \ \ \ (CASE) \ \ \ \ \ \ $ \medskip \\
  $\langle \readTVar{t} , \sigma \rangle \rightarrow \langle \returnTM{\sigma(t)} , \sigma \rangle \ \ \ (READ) \ \ \ \ \ \ $ \medskip \\ 
  $\langle \writeTVar{t}{e} , \sigma \rangle \rightarrow \langle \returnTM{\mathtt{()}} , \sigma[ t \mapsto e ] \rangle \ \ \ (WRITE) \ \ \ \ \ \ $ \medskip \\
  $\langle \returnTM e_1 \bindTM e_2 , \sigma \rangle \rightarrow \langle e_2 \ e_1 , \sigma \rangle\ \ \ (BIND) \ \ \ \ \ \ $ \medskip \\
  $\cfrac{\langle e , \sigma \rangle \rightarrow \langle e' , \sigma' \rangle}{\langle \evalcont[\circ / e] , \sigma \rangle \rightarrow \langle \evalcont[\circ / e'] , \sigma' \rangle} \ \ \ (CTX) \ \ \ \ \ \ $
  $ \langle \evalcont[\circ / r] , \sigma \rangle \rightarrow \langle r , \sigma \rangle \ \ \ (EXC) \ \ \ \ \ \ $ 
 \medskip \\

\medskip

$\evalcont ::= \circ \ | \ \evalcont \ e_2 \ | \ \caseof{\evalcont}{\{alt_1...alt_n\}} \ | \  \evalcont \ \bindTM \ e_2 \ $
\end{minipage}}
  \caption{Semantics of $\mathcal{H}$ expressions.}
  \label{fig:semantics}
\end{figure*}

An expression is evaluated (or reduced) with respect to an \emph{environment} $\sigma$, which is a mapping from \emph{TVar}s $t_i \in \mathcal{V}$ to pure expressions. We will denote by $\sigma(t_i)$ the (pure) expression that is associated to the \emph{TVar}s $t_i$ in the environment $\sigma$. We will sometimes call this expression the \emph{(transactional) value of $t_i$}. We will denote by $\sigma[ t_i \mapsto e ]$ the environment $\sigma'$ such that $\sigma'(t_i)=e$ and $\forall t_k \in V/\{t_i\} : \sigma(t_k) = \sigma'(t_k)$. The rules by which an expression can be reduced are given in Fig.~\ref{fig:semantics}, in the form of a reduction relation~\cite{principles-program-analysis} $\cpl{e,\sigma} \rightarrow \cpl{e',\sigma'}$ which, from the combination of an expression $e$ and an environment $\sigma$ returns a new expression $e'$ and a new environment $\sigma'$. 
The reduction rules CALL and APP, where we denote the capture-avoiding substitution of $e'$ for each free occurrence of $x$ in $e$ by $e[x/e']$, as well as CASE and BIND are standard rules for functional languages. Note that these reductions have no side-effect in the sense that $\sigma$ is not modified. The rules READ and WRITE define the semantics of reading, respectively updating, a \emph{TVar}. Note that each of these operations reduces to a return operation, allowing to bind their result with a second STM expression (BIND). The rule CTX allows a reduction to be processed in any \emph{context} of the form $\evalcont$ and the rule EXC allows to propagate an exception $r$, which, by the way, can be of any type following the context.
As usual, we denote by $\rightarrow^{*}$ the  reflexive-transitive closure of $\rightarrow$. 
To ease notation, when dealing with expressions other than STM operations, i.e. other than of type \texttt{STM a}, we will omit the environment, i.e. we write $e \rightarrow^{*} e'$ instead of $\cpl{e,\sigma} \rightarrow^{*} \cpl{e',\sigma}$, as the environment is never used nor modified when reducing a such expression. In this case, our semantics coincides with the semantics of the language defined in \cite{static-contract-checking}.   

\newcommand{\diverges}[1]{\cpl{#1}\uparrow^{*}}
\newcommand{\divergespure}[1]{#1\uparrow^{*}}

The attentive reader will notice that 
\emph{TVar}s are considered as global variables: they can be accessed from everywhere in the program, but only through a direct reference. 
We will discuss the relevance of these and other limitations of $\mathcal{H}$ in Section~\ref{discussion}.

In what follows, we suppose that expressions to be analyzed are processed beforehand in such a way that missing branches in a case expression are explicitly associated with a \texttt{BAD} exception. For example, if $\mathcal{K} = \{ \mathtt{True}, \mathtt{False} \}$, the expression $\lambda x. \caseof{x}{\{\mathtt{True} -> f\}}$ would be replaced by the expression $\lambda x. \caseof{x}{\{\mathtt{True} -> f, \mathtt{False} -> \mathtt{BAD}\}}$. As in~\cite{static-contract-checking}, we say that an expression crashes if it reduces to \texttt{BAD}.

\begin{definition}
  Let $e$ be an expression and $\sigma$ an environment, $e$ \emph{crashes} in $\sigma$ iff $\cpl{e,\sigma} \rightarrow^{*} \cpl{\mathtt{BAD},\sigma'}$.
\end{definition}

In the particular case of pure expressions, 
we call an expression \emph{crashfree} if and only if there is no way to make it crash (due to a missing pattern) \cite{static-contract-checking}. More formally:
\begin{definition}
  Let $e$ be a pure expression, $e$ is \emph{crashfree} iff $\evalcont[\circ / e] \not\rightarrow^{*} \mathtt{BAD}$ for any context $\evalcont$ such that $\mathtt{BAD}$ does not appear syntactically in $\evalcont$.
\end{definition}

In a similar vein, we say that an expression \textit{diverges} if it cannot be reduced to a value, i.e. a lambda abstraction, a construction, a \texttt{return} or an exception, or if it reduces to \texttt{UNR}. The latter condition will turn out to be interesting in the context of the verification process.

\begin{definition}
  Let $e$ be an expression and $\sigma$ an environment, $e$ \emph{diverges} in $\sigma$, written $\diverges{e,\sigma}$, iff $\cpl{e,\sigma} \rightarrow^{*} \cpl{\mathtt{UNR},\sigma'}  $ 
 or there is no value $val$ such that $\cpl{e,\sigma} \rightarrow^{*} \cpl{val,\sigma'}$.
\end{definition}

Again, in case of an expression of a type other than \texttt{STM a}, we will often omit the environment from the notation and simply write $\divergespure{e}$ to denote that $e$ diverges.

\subsection{Contracts}
\label{contracts}

\newcommand{\contractTM}[3]{\ensuremath{\parallel #1 \ \lozenge \ #2 \parallel} #3}

The syntax of contracts is given in Fig.~\ref{fig:contractsyntax}. Contracts defined by application of only the rules $C1$ - $C4$ are reserved for specifying contracts on pure expressions and are identical to those defined in \cite{static-contract-checking}. We call them \emph{pure contracts} in order to distinguish them from \emph{STM contracts} which are contracts involving at least one application of the novel rule $C5$. Intuitively, we will associate pure contracts to pure expressions and STM contracts to STM expressions. 

\begin{figure}[htb]
  \centering
\begin{math}
    \begin{array}{llllr}
      c & \in & \mbox{\textbf{Contracts}} \\
      c & ::= & \{ x \ | \ p \}        & \mbox{\footnotesize{Predicate Contract}} & [C1] \\ 
        &  |  & x:c_1 \rightarrow c_2  & \mbox{\footnotesize{Dependent Function Contract}} & [C2]\\
        &  |  & (c_1,c_2)              & \mbox{\footnotesize{Data Constructor Contract}} & [C3]\\
        &  |  & \code{Any}             & \mbox{\footnotesize{Polymorphic Any Contract}} & [C4]\\
        &  |  & \contractTM{x : c_1}{c_2}{c}
                                       & \mbox{\footnotesize{STM Operation Contract}} & [C5]\\ 
    \end{array}
  \end{math}
  \caption{Syntax of contracts}
  \label{fig:contractsyntax}
\end{figure}


We choose this syntax for the contract $C5$, which is called an \emph{STM operation contract}, to fit with the type of STM operations, i.e. \texttt{STM a}. As we will see, the first part -- $\contractTM{x : c_1}{c_2}{}$ -- is related to the (software) transactional memory \texttt{STM} and the second -- the contract $c$ -- to the expression returned, of type \texttt{a}. 
Like expressions, contracts are assumed to be well-typed. 
For the contract $C5$, we expect $c$ to be the kind of contract which is typically associated to expressions of type \texttt{a}. Regarding $c_1$ and $c_2$, they should be contracts for expressions of type $(\mathtt{a_1},...,\mathtt{a_n})$ where $\mathtt{a}_i$ is the type of the expression stored in the \emph{TVar}s $t_i$. 
This idea is expressed more formally by the typing rule in Fig.~\ref{fig:typing}, which extends the typing system for contracts defined in~\cite{static-contract-checking-thesis}.
Note that this implies that $c$, $c_1$ and $c_2$ are required to be pure contracts, as \emph{TVar}s and returned expressions must be pure expressions.

\begin{figure}[htb]
  \centering
  $\cfrac{\forall i : 1 \leq i \leq n : t_i :: \mathtt{TVar \ a_i} ; c_1,c_2 :: (\mathtt{a_1},...,\mathtt{a_n}) ; c::\mathtt{a}}{\contractTM{x : c_1}{c_2}{c} :: \texttt{STM a}}$
  \caption{Typing rule for the STM operation contract.}
  \label{fig:typing}
\end{figure}

The semantics of an expression $e$ \emph{satisfying} a contract $c$, denoted by $e \in c$ is defined in Fig.~\ref{fig:contractsemantics}. The rules $CS1$ - $CS4$ are based on the original work from \cite{static-contract-checking}. 
Intuitively, $e \in \{ x \ | \ p \}$, where $p$ is typically a boolean expression, if $e$ is a sane expression (there is no proper way to make it crash) and the predicate $p[e/x]$ returns \texttt{True}. 
Note that the frequently used contract \texttt{Ok} is just a syntactic notation for a contract $\{ x \ | \ \mathtt{True} \}$.
%
%
An expression $e$ satisfies $x : c_1 -> c_2$ if it satisfies $c_2$ when given an argument that satisfies $c_1$. Note the use of $x$ which allows to refer from within $e_2$ to the value of the argument.
%
Likewise, an expression $e$ satisfies a pair of contracts if it evaluates to a pair and if each element satisfies its corresponding contract. As a pair is simply a particular constructor from $\mathcal{K}$ with a somewhat nonstandard notation, this contract can effectively be generalized to any constructor from $\mathcal{K}$. 
%
The special contract \texttt{Any} is satisfied by any pure expression, including crashing expressions such as \texttt{BAD}. 

\begin{figure*}[htb]
  \centering
  \fbox{\begin{math}
    \begin{array}{lllr}
      e \in \{ x \ | \ p \}     & \Leftrightarrow & e \mbox{ is pure and } ( \divergespure{e} \mbox{or } 
                                                    ( e \mbox{ is crashfree and } \\ && 
                                                  (\divergespure{p[e/x]} \mbox{ or } p[e/x] \rightarrow^{*} \mathtt{True}))) & [CS1] \smallskip \\
      e \in x : c_1 -> c_2      & \Leftrightarrow & \divergespure{e} \mbox{or } 
                                                    ( e \rightarrow^{*} \lambda x.e_2 \mbox{ and } \\ 
                                                 && \forall e_1 \in c_1 : (e \ e_1) \in c_2[e_1/x]) & [CS2] \smallskip\\
      e \in (c_1,c_2)           & \Leftrightarrow & \divergespure{e} \mbox{or } 
                                                    ( e \rightarrow^{*} (e_1,e_2) \mbox{ and }  \\ 
                                                 && e_1 \in c_1,e_2 \in c_2) & [CS3] \smallskip\\
      e \in \mathtt{Any}        & \Leftrightarrow & e \mbox{ is pure.} & [CS4] \smallskip \\
      e \in \ \contractTM{x : c_1}{c_2}{c}
                                & \Leftrightarrow & \forall \sigma : \vv{\sigma} \in c_1 : \diverges{e,\sigma} \\  && \mbox{or } 
                                                    ( \cpl{e,\sigma} \rightarrow^{*} \cpl{\returnTM{e'},\sigma'} \mbox{ and }  \\ 
                                                 && \vv{\sigma'} \in c_2[\vv{\sigma}/x] \mbox{ and } e' \in c[\vv{\sigma}/x]) & [CS5] \smallskip\\

    \end{array}
  \end{math}}
  \caption{Contract Satisfaction}
  \label{fig:contractsemantics}
\end{figure*}

The definition $CS5$ is more particular, as it is dedicated to STM operations. 
Intuitively, a STM operation $e$ satisfies $\contractTM{x : c_1}{c_2}{c}$ if, when it is performed with respect to an environment that satisfies $c_1$, it produces a new environment that satisfies $c_2$ and, moreover, it returns an expression satisfying $c$.
In its definition, we use $\vv{\sigma}$ to refer to the pure expression $(e_1,...,e_n)$ where $e_i = \sigma(t_i)$ for all $t_i \in \mathcal{V}$.
Note also the use of $x$ that allows to refer to the input environment both in $c$ and $c_2$.

The attentive reader will notice that the satisfaction of a contract by an expression does not depend on specific requirements about the value of the \emph{TVar}s, which is desirable as we target a static analysis. This is why we ensure that expressions will behave properly with respect to a given contract regardless the environment, which is incidentally omitted in $CS1$ - $CS4$. 
Also note that a diverging expression satisfies any contract and that the expression $p$ in a contract of the form $\{ x \ | \ p \}$ can also diverge. This means that our framework only concerns \emph{partial correctness} and should be paired with a termination analysis \cite{binary-reachability-analysis,automated-termination-analysis} in order to obtain results concerning total correction.

In Fig.~\ref{fig:ex:send}, we show an example of a function with a contract such that it is satisfied by the function definition. 
Note that, as for the others examples, we use here a convenient Haskell-like syntax, which is also the one we use in our prototype, instead of our more formal but equivalent syntax.  
Intuitively, the point of this function is to store a message into a box -- which is represented by the TVar \texttt{box}, to log this action by incrementing a counter -- another TVar \texttt{ct}, and to return the message that was previously stored in the box. This operation is allowed only if one are connected -- i.e. if the expression stored in the TVar \texttt{c} reduces to \texttt{True}. As such, the first part of the contract depicts the fact that there is no specific requirement about the message, apart from being crashfree. In the second part, we require the TVar \texttt{c} to contain \texttt{True}, and we say that, if this requirement holds, the transaction modifies the \emph{TVar}s in a such way that \texttt{c} is still \texttt{True} (we are still connected), the counter \texttt{ct} has been increased, and the expression returned correspond the initial expression stored in \texttt{box}. Note that, while this should not disturb the reader, we have used some syntactic sugar in order to allow tuples of variable in the left part of predicate contracts, i.e. $\{ (x_1,...,x_n) \ | \ p \}$ as a shorthand for $\{ x \ | \ \caseof{x}{(x_1,...x_n) -> p} \}$, and to avoid repeating variable(s) in (dependent) contracts when it is not useful, i.e. $\contractTM{\{x \ |\ p\}}{c_2}{c}$ as a shorthand for $\contractTM{x:\{x \ | \ p\}}{c_2}{c}$. 

\begin{figure}[htb]
  \centering
\begin{footnotesize}
\begin{quote}
\begin{verbatim}
send :: Msg -> STM Msg
send :: Ok  -> || {(c, box, ct ) | c} <> {(c',box',ct') | c' && ct'>ct} || {res|res==box} 
send msg = do connected <- readTVar c
              case connected of 
                True  -> do oldMsg <- readTVar box
                            writeTVar box msg
                            x <- readTVar ct
                            writeTVar ct (x+1)
                            return oldMsg
                False -> BAD
\end{verbatim}
\end{quote}
\end{footnotesize}  
  \caption{The function \texttt{send} and its type and contract.}
  \label{fig:ex:send}
\end{figure}

Note that our extensions to the contract system of \cite{static-contract-checking} are such that the set of desirable properties, discussed in the latter work, still hold. For instance, we can still deal with function calls and recursion in a modular way: to check that the definition of a function $f$ satisfies the contract $c_f$ where $f$ is defined as $f = e$ and $e$ is an expression containing a call to function $g$, we can simply check that $\lambda g . e \in c_g \rightarrow c_f$ where $c_g$ is the contract of $g$. 

Now that we have introduced the basic formalism allowing to specify contracts over expressions, we can now formally define what it means for a contract to be a \emph{transactional invariant}:
\begin{definition}
  An STM operation $e$ is \emph{consistent} with respect to a contract $c$ iff $e \in \ \contractTM{c}{c}{\mathtt{Any}}$. The contract $c$ is then called an \emph{transactional invariant} of $e$.
\end{definition}
For example, the contract $\{ t \ | \ t > 0 \}$  is a transactional invariant of the STM operation 
\[ e = \readTVar t \bindTM \lambda x. \writeTVar{t}{(x+1)} \]

To conclude this section, let us briefly restate the basic idea behind the verification process: first, the programmer writes a contract $c$ that corresponds to his or her view of the consistency over the \emph{TVar}s, and, possibly, contracts for the functions defined by either pure or STM expressions. Secondly, 
it needs to be verified whether the contract $c$ is effectively a transactional invariant for every transactions appearing in the program, a transaction being the STM operation $e$ where \texttt{atomically(}$e$\texttt{)} appears in the program. 
Every transaction for which $c$ cannot be proven to be a transactional invariant possibly represents an application-level race condition.
Before formalizing the verification step, note that our definitions imply that a transaction be a closed expression, i.e. it may not contain free variables. If there \emph{are} free variables in a transaction $e$, rather than verifying $e \in \ \contractTM{c}{c}{\mathtt{Any}}$, one must verify whether  $(\lambda x_1 ...  \lambda x_n . e)\: \in\: c_1 -> \cdots -> c_n -> \ \contractTM{c}{c}{\mathtt{Any}}$,
 where $x_1,...,x_n$ are the free variables in $e$ and $c_1,...,c_n$  the contracts associated to $x_1,...,x_n$. 
This is what was already illustrated by the example at the end of Section~\ref{idea}, where the contract \texttt{Ok} was associated to the (initially) free variable \texttt{n}. Note that, in practice, these contracts can be provided by the programmer indirectly, through a top-level contract attribution, or -- to some extent -- be automatically generated with by-default value \cite{static-contract-checking}. This, however, is out of the scope of the current work.

\subsection{Checking through Program and Contract Transformation}
\label{checking}

Our approach in checking whether a program written in STM Haskell is free of application-level race conditions consists then in transforming a transaction $e$ and its contract $c$ into a pure expression $e'$ and a pure contract $c'$, in a such way that $e' \in c'$ implies $e \in c$. This transformation, represented by the $\pt$-operator, is defined -- for expressions -- in Fig.~\ref{fig:pt}. For sake of clarity, the transformation assumes that only a single TVar is handled by the program, i.e. $\mathcal{V} = \{t\}$, but the definitions can readily be extended towards handling a given set of \emph{TVar}s.
\begin{figure*}[t]
  \centering
\begin{math}
    \begin{array}{lllllll}
     & \pt(\readTVar{t})     & = & \lambda t . (t,t) \\
     & \pt(\writeTVar{t}{e}) & = & \lambda t . (\code{()},e) \\
     & \pt(e_1 \bindTM e_2)  & = & \lambda t . (\pt(e_2) \ (\code{fst} \ e_1') \ (\code{snd} \ e_1')) & \mbox{where} \ & e_1'  & = \pt(e_1) \ t \\
      &                       &   &                                                              &                & \code{fst}& = \lambda (a,b) . a \\
     &                       &   &                                                              &                & \code{snd}& = \lambda (a,b) . b \\
     & \pt(\returnTM{e})     & = & \lambda t . (e,t) \\  
     & \pt(f)                & = & \lambda t . (f \ t)                                            & \multicolumn{3}{l}{\mbox{if $f :: \code{STM a}$.}}\\
     &                       & = & f                                                            & \multicolumn{3}{l}{\mbox{else.}}\\
     & \pt(e_1 \ e_2)        & = & \lambda t . (\pt(e_1) \ \pt(e_2) \ t )                        & \multicolumn{3}{l}{\mbox{if $e_1 \ e_2:: \code{STM a}$.}}\\
     &                       & = & \pt(e_1) \ \pt(e_2)                                                    & \multicolumn{3}{l}{\mbox{else.}}\\%
     & \pt(\lambda x . e)    & = & \lambda x . \pt(e) \\
     & \pt(K \ \overline{e}) & = & K \ \overline{\pt(e)} \\
     & \pt(x)                & = & x \\
     &  \pt(r)                & = & r \\
     & \pt( e_{\mathtt{case}}) & = & \multicolumn{2}{l}{\lambda t . ((\caseof{\pt(e)}{\overline{pat_i -> \pt(e_i)}}) \ t)} & \multicolumn{2}{l}{\mbox{if  $e_{\mathtt{case}} :: \code{STM a}$.}}\\
     &                       & = & \multicolumn{2}{l}{\caseof{\pt(e)}{\overline{pat_i -> \pt(e_i)}}} & \multicolumn{2}{l}{\mbox{else.}}\\
     &                        \multicolumn{3}{l}{\ \ \ \ \mbox{where } e_{\mathtt{case}} \equiv\caseof{e}{\overline{pat_i -> e_i}} } 
   \end{array}
  \end{math}
  \caption{$\pt$-transformation for $\mathcal{H}$ expressions}
  \label{fig:pt}
\end{figure*}

The basic idea is to transform a STM operation $e$ returning $e'$ and possibly updating the (transactional) value of $t$ into a lambda expression $\lambda t.(e',e'')$ where $e''$ is the updated value of $t$. 
Reading the TVar is modeled by a function that associates to the initial value of $t$ the couple $(t,t)$. This depicts the fact that the operation returns the value stored in the TVar (the first $t$) while the value stored in $t$ does not change (the second $t$). 
The update operation does not return a relevant expression (so we return the nullary constructor \texttt{()}, as usual in Haskell) but it replaces the expression that was previously stored in $t$ by the given expression. 
Note that, as we consider well-formed expressions, the expression $e$ that will be written into the TVar is pure, and hence need not be transformed.
When a STM operation is binded with another STM expression, both expressions are converted and the return expression and the updated TVar value of the first transformed expression are provided as input for application with the second transformed expression. Note the use of \texttt{fst} and \texttt{snd} that retract, respectively, the first and the second expressions from a pair.
To help understanding the intuition behind this transformation, we can see that 
\begin{center}
  \begin{math}
    \begin{array}{c} 
      \pt(\readTVar{t} \bindTM \lambda x.\writeTVar{t}{(x+1)}) \\
      = \\
      \lambda t.((\lambda x.\lambda t.(\code{()},x+1)) \ (\code{fst} \ ((\lambda t.(t,t)) \ t)) \ (\code{snd} \ ((\lambda t.(t,t)) \ t))
    \end{array}
  \end{math}
\end{center}
which can  \emph{symbolically} \cite{compiling-haskell-program} be rewritten into $\lambda t . (\code{()},t+1)$, clearly reflecting the update of $t$ by $t+1$.
Transforming a return operation is more straightforward as it does not change the value of the \emph{TVar}.
%
%
%
%
%
Transforming the remaining expressions basically boils down to propagating the transformation to their subexpressions, as the latter may contain STM operations.
Note that our transformation $\pt$-operator is defined such that:
\begin{itemize}
\item if $e$ is a STM operation, then the execution of $\pt(e) \ e'$ gives a couple $(e_1,e_2)$ where $e_1$ is the expression that would be returned by the STM operation $e$, and $e_2$ is the expression that would be stored finally in the TVar if the latter would have contained $e'$ before performing the STM operation.  
\item if $e$ is a pure expression, then the execution of $\pt(e)$ will produce the same result as the execution of $e$.
\end{itemize}
Moreover, the $\pt$-operator can be generalized easily for a (totally ordered) set of $n$ \emph{TVar}s, i.e. $\mathcal{V} = \{ t_1,...,t_n \}$. This would require transforming into an expression that takes as argument a tuple $(t_1,...,t_n)$ instead of a single $t$ in Fig~\ref{fig:pt} and that returns a couple where the second element is a new tuple of $n$ expressions. For example, the first rule would look like:\footnote{we use the following syntactic sugar : $ \lambda (t_1,...t_n) . e \equiv \lambda x . \caseof{x}{(t_1,...,t_n) -> e}$ where $x$ is not a free variable of $e$.}
\begin{center}
  \begin{math}
    \begin{array}{lll}
      \pt(\readTVar{t_k})     & = & \lambda (t_1,...,t_k,...,t_n) . (t_k,(t_1,...,t_k,...,t_n)) \\
    \end{array}
  \end{math}
\end{center}

We also need a technique to convert STM contracts into pure contracts. 
For this purpose, we override the $\pt$-operator such that it deals with contracts too. The transformation is rather straightforward and depicted in Fig.~\ref{fig:pt:contract}. The main point is the conversion of a STM operation contract into a dependant function contract in order to fit with the form of a transformed STM operation.

\begin{figure}[htb]
  \centering
\begin{math}
    \begin{array}{rll}
       \pt(\{ x \ | \ p \}) & = & \{ x \ | \ p\}\\
        \pt(x:c_1 \rightarrow c_2)  & = & x:\pt(c_1) \rightarrow \pt(c_2) \\
        \pt(c_1,c_2) & = & (\pt(c_1),\pt(c_2))    \\
        \pt(\code{Any}) &=& \code{Any}\\
        \pt(\contractTM{x : c_1}{c_2}{c}) &=& x : c_1 \rightarrow (c,c_2) \\ 
    \end{array}
  \end{math}
  \caption{$\pt$-transformation for contracts}
  \label{fig:pt:contract}
\end{figure}

Fig.~\ref{fig:pt:prop} enlists a number of easily proven but important properties of the $\pt$-operator. 
Property~\eqref{prop:pure} states that the resulting expression, or contract, is pure in the sense that no STM-related construction remains after the transformation. This follows immediately from the definition of $\pt$ and the fact that we consider only well-formed contracts, i.e. subcontracts of STM contracts are pure. 
Secondly, $\pt$ is an idempotent operator \eqref{prop:idempotence}  and it does not change its argument when the latter is already pure (\ref{prop:pure:equiv:e} - \ref{prop:pure:equiv:c}).
Further, it follows that the transformation of a STM operation different from \texttt{BAD} and \texttt{UNR} always reduces  to a lambda abstraction \eqref{prop:ptconv}.
%
The two last properties \eqref{prop:div} and \eqref{prop:conv} which can be proved by induction on $e$, are fundamental in our framework and depict the equivalence that links the semantics of a STM operation with its transformed pure counterpart. 

\begin{figure}[htbp]
  \centering
    \begin{equationarray}{llr}
    \bullet & \mbox{$\pt(e)$, resp. $\pt(c)$, is a pure expression, resp. contract.} \label{prop:pure} \\
    \bullet & \pt(\pt(e)) \equiv \pt(e), \pt(\pt(c)) \equiv \pt(c)                             \label{prop:idempotence} \\
    \bullet & \pt(e) \equiv e \mbox{ if $e$ is a pure expression.}                             \label{prop:pure:equiv:e} \\
    \bullet & \pt(c) \equiv c \mbox{ if $c$ is a pure contract.}                             \label{prop:pure:equiv:c} \\
    \bullet & \pt(e) \rightarrow^{*} \lambda x.e' \mbox{ if $e \ \mathtt{:: STM \ a}$ and $e \not\equiv r$.} \label{prop:ptconv} \\
    \bullet & \diverges{e,\sigma} \iff \divergespure{\pt(e) \ \vv{\sigma}} \mbox{ for $e \ \mathtt{:: STM \ a}$}      \label{prop:div} \\  
    \bullet & \cpl{e,\sigma_1} \rightarrow^{*} \cpl{\returnTM{e'},\sigma_2} \iff \pt(e) \ \vv{\sigma_1} \rightarrow^{*} (e',\vv{\sigma_2}) \ \\ 
            & \mbox{ for $e \ \mathtt{:: STM \ a}$} \label{prop:conv}  
    \end{equationarray}
  \caption{Properties of $\pt$}
  \label{fig:pt:prop}
\end{figure}

Trivially, as a direct consequence of properties \eqref{prop:pure:equiv:e} and \eqref{prop:pure:equiv:c}, we have that $e \in c \iff \pt(e) \in \pt(c)$ when $e$ and $c$ are a pure expression and contract. 
More importantly, the same property holds for a STM operation and a STM operation contract of same type, as stated by the following theorem.
\begin{theorem}
Let $e$ be an expression of type $\mathtt{STM \ a}$ and let $c$ be a contract of type $\mathtt{STM \ a}$, 
\[ e \in c \iff \pt(e) \in \pt(c) \]
\end{theorem}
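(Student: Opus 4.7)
The plan is to unfold the semantics of contract satisfaction on both sides of the biconditional and bridge them using the already-established properties of the $\pt$-operator. Since $c :: \mathtt{STM\ a}$, the typing rule of Fig.~\ref{fig:typing} forces $c$ to have the form $\contractTM{x:c_1}{c_2}{c_3}$, and by Fig.~\ref{fig:pt:contract} we then have $\pt(c) = x:c_1 \rightarrow (c_3,c_2)$. I would first dispatch the easy cases where $e \equiv \mathtt{BAD}$ or $e \equiv \mathtt{UNR}$ separately: there $\pt(e) = e$ by definition, and each side of the equivalence can be verified by inspection. From here on I focus on the main case $e \not\equiv r$, where property \eqref{prop:ptconv} guarantees that $\pt(e) \rightarrow^{*} \lambda x.e''$ for some $e''$, so in particular $\pt(e)$ itself does not diverge.

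In this main case, satisfaction of $\pt(c)$ by $\pt(e)$ unfolds via rule CS2 to the requirement that for every $e_1 \in c_1$, $(\pt(e)\ e_1) \in (c_3,c_2)[e_1/x]$; by CS3 this further unfolds to the condition that the application either diverges or reduces to a pair $(r_1,r_2)$ with $r_1 \in c_3[e_1/x]$ and $r_2 \in c_2[e_1/x]$. On the other hand, $e \in c$ unfolds via CS5 to the requirement that for every environment $\sigma$ with $\vv{\sigma} \in c_1$, either $\cpl{e,\sigma}$ diverges or $\cpl{e,\sigma} \rightarrow^{*} \cpl{\returnTM{e'},\sigma'}$ with $\vv{\sigma'} \in c_2[\vv{\sigma}/x]$ and $e' \in c_3[\vv{\sigma}/x]$. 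The bridge between these two formulations is supplied directly by properties \eqref{prop:div} and \eqref{prop:conv}, which translate divergence and terminating evaluation of $\cpl{e,\sigma}$ into the corresponding behavior of $\pt(e)\ \vv{\sigma}$.

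For the $(\Rightarrow)$ direction I pick an arbitrary $e_1 \in c_1$ and distinguish two subcases. If $e_1$ is a diverging pure expression, then $\pt(e)\ e_1$ also diverges by context reduction, so CS3 holds vacuously. Otherwise $e_1$ reduces to a tuple $(v_1,\ldots,v_n)$, and I define $\sigma$ by $\sigma(t_i) := v_i$ so that $\vv{\sigma}$ is exactly this tuple; since contract satisfaction is invariant under reduction of the checked expression, $\vv{\sigma} \in c_1$. Applying the hypothesis $e \in c$ to this $\sigma$ and invoking \eqref{prop:div} and \eqref{prop:conv} then yields exactly the pair-structured conclusion required for $\pt(e)\ \vv{\sigma}$, which transfers to $\pt(e)\ e_1$ because the two differ only by reduction of the argument. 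The $(\Leftarrow)$ direction is the mirror image: for any $\sigma$ with $\vv{\sigma} \in c_1$, instantiate the hypothesis at $e_1 := \vv{\sigma}$ and apply the reverse directions of \eqref{prop:div} and \eqref{prop:conv}.

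The step I expect to be the main obstacle is the implicit appeal to ``contract satisfaction is invariant under reduction'' of the checked expression, needed both to go from $e_1 \in c_1$ to $\vv{\sigma} \in c_1$ and to identify the substituted contracts $c_j[e_1/x]$ with $c_j[\vv{\sigma}/x]$. This requires a small auxiliary lemma, ultimately grounded in the fact that the semantics of predicate contracts in CS1 is defined up to $\rightarrow^{*}$. Handling this carefully, together with a uniform treatment of the divergent-argument corner case, will be the bulk of the work; once these are in place, the remainder is a routine unfolding of definitions.
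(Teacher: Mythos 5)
Your proposal follows the paper's own proof essentially step for step: the same three-way case split on $e \equiv \mathtt{BAD}$, $e \equiv \mathtt{UNR}$ and the remaining case, and in that main case the same chain that rewrites $\pt(c)$ as $x : c_1 \rightarrow (c_3,c_2)$, unfolds one side via $CS2$/$CS3$ and the other via $CS5$, and bridges the two with properties \eqref{prop:ptconv}, \eqref{prop:div} and \eqref{prop:conv}.

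The one place you diverge is the quantifier exchange between ``$\forall e_1 \in c_1$'' and ``$\forall \sigma$ with $\vv{\sigma} \in c_1$'', which the paper dismisses with the bare annotation ``def.\ of $\sigma$ + type of $c_1$''; you are right that this step hides an auxiliary lemma about invariance of satisfaction under reduction, and making that explicit is an improvement. However, your treatment of the divergent-argument subcase is wrong as stated: you claim that if $e_1$ diverges then $\pt(e)\ e_1$ diverges ``by context reduction''. The evaluation contexts of Fig.~\ref{fig:semantics} only reduce the \emph{function} position of an application ($\evalcont \ e_2$), and the $APP$ rule substitutes the argument unevaluated, so the calculus is call-by-name: $(\lambda x.e'')\ e_1$ steps to $e''[e_1/x]$, which can perfectly well converge to a pair even when $e_1$ diverges (e.g.\ $\pt(\returnTM{5})\ e_1 \rightarrow (5,e_1)$). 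So $CS3$ does not hold vacuously there; you must instead argue directly that the resulting pair satisfies $(c_3,c_2)[e_1/x]$, exploiting the facts that a diverging component satisfies any pure contract and that a predicate forced to inspect a diverging substituted $x$ itself diverges. To be fair, the paper's proof silently skips this corner case altogether, so your attempt is more honest about where the difficulty lies even though the particular justification you give for it does not go through.
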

\begin{proof}
As $c$ is of type \texttt{STM a}, it is a contract of the form $c \ \equiv \ \contractTM{c_1}{c_2}{c'}$. We distinguish three cases:
\begin{itemize}
\item $e \equiv \mathtt{BAD}$ : 
We can see, by the semantics of contracts and expressions, that $\mathtt{BAD} \not\in \ \contractTM{c_1}{c_2}{c'}$ and $\pt(\mathtt{BAD}) \equiv \mathtt{BAD} \not\in \ x : c_1 -> (c',c_2) \equiv \pt(\contractTM{c_1}{c_2}{c'}) $ for any $c_1,c_2,c'$. 
\item $e \equiv \mathtt{UNR}$ : 
Similarly, $\mathtt{UNR} \in \ \contractTM{c_1}{c_2}{c'}$ and $\pt(\mathtt{UNR}) \equiv \mathtt{UNR} \in \ x : c_1 -> (c',c_2) \equiv \pt(\contractTM{c_1}{c_2}{c'}) $ for any $c_1,c_2,c'$. 
\item $e \not\equiv \mathtt{BAD},\mathtt{UNR}$ : \medskip\\
  \begin{math}
    \begin{array}{ll}
      & \pt(e) \in \pt(c) \smallskip \\
      \iff & \mbox{(form of $c$)}\smallskip\\
      & \pt(e) \in \pt(\contractTM{c_1}{c_2}{c'})\smallskip \\
      \iff & \mbox{(def. of $\pt$)} \smallskip\\
      & \pt(e) \in x : c_1 -> (c',c_2) \smallskip\\
      \iff & \mbox{(def. of $\in$ + prop. \eqref{prop:ptconv})} \smallskip\\
      & \forall e_1 \in c_1 : (\pt(e) \ e_1) \in (c',c_2)[e_1/x] \smallskip\\ 
      \iff & \mbox{(def. of $\sigma$ + type of $c_1$)} \smallskip\\
      & \forall \sigma : \vv{\sigma} \in c_1 : (\pt(e) \ \vv{\sigma}) \in (c',c_2)[\vv{\sigma}/x]\smallskip \\
      \iff & \mbox{(def. of $\in$ + type of $c_2$)} \smallskip\\
      & \forall \sigma : \vv{\sigma} \in c_1 : \divergespure{\pt(e) \ \vv{\sigma}} \mbox{or } 
                                                    ( (\pt(e) \ \vv{\sigma}) \rightarrow^{*} (e',\vv{\sigma'}) \\ & \mbox{ and } e' \in c'[\vv{\sigma}/x], \vv{\sigma'} \in c_2[\vv{\sigma}/x]) \smallskip\\
      \iff & \mbox{(prop. \eqref{prop:div} and \eqref{prop:conv})} \smallskip\\
      & \forall \sigma : \vv{\sigma} \in c_1 : \diverges{e,\sigma} \mbox{or } 
                                                    ( \cpl{e,\sigma} \rightarrow^{*} \cpl{\returnTM{e'},\sigma'} \\ & \mbox{ and } e' \in c'[\vv{\sigma}/x]), \vv{\sigma'} \in c_2[\vv{\sigma}/x] \\    
      \iff &  \mbox{(def. of $\pt$)} \smallskip\\
      & e \in \ \contractTM{c_1}{c_2}{c'} \smallskip\\                             
      \iff &  \mbox{(form of $c$)}\smallskip\\ 
      & e \in c                              
    \end{array}
  \end{math}
\end{itemize}
\end{proof}

The following corollary generalizes the above result towards any expression and contract (of the same type). It can be easily proven by induction on the structure of the expression and corresponding contract, using the results for a pure subexpression (contract) and a subexpression (contract) of type \texttt{STM a} as base cases.
\begin{corollary}
  Let $e$ be an expression of type $\mathtt{a}$ and let $\mathtt{c}$ be a contract of type $a$, 
  \[ e \in c \iff \pt(e) \in \pt(c) \]
\end{corollary}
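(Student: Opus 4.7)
The plan is to prove the corollary by structural induction on the contract $c$, using the just-proved theorem and properties \eqref{prop:pure:equiv:e}--\eqref{prop:pure:equiv:c} as base cases. Because $e$ and $c$ must share the same type, the shape of $e$ is constrained by the shape of $c$ at each step, which is what keeps the induction manageable.

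The base cases cover the two extremes. If $c$ is an STM operation contract $\contractTM{x:c_1}{c_2}{c'}$, then by the typing rule $e$ has type $\mathtt{STM\ a}$, and the previous theorem applies verbatim. If $c$ is pure -- i.e.\ \texttt{Any} or a predicate contract $\{x \mid p\}$ -- and the matching expression $e$ is pure too, then $\pt(e) \equiv e$ and $\pt(c) \equiv c$ by \eqref{prop:pure:equiv:e} and \eqref{prop:pure:equiv:c}, so the iff is immediate.

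For the inductive step I would split on the outermost form of $c$. If $c = (c_1, c_2)$, CS3 says that $e \in c$ holds exactly when $e$ diverges or $e \to^{*} (e_1, e_2)$ with each $e_i \in c_i$; by the clause $\pt(K\,\overline{e}) = K\,\overline{\pt(e)}$ the transformed expression reduces to $(\pt(e_1), \pt(e_2))$, and $\pt(c) = (\pt(c_1), \pt(c_2))$, so one invocation of the IH per component closes the case. If $c = x:c_1 \to c_2$, CS2 says that, modulo divergence, $e \to^{*} \lambda x.e'$ and $(e\,e_1) \in c_2[e_1/x]$ for every $e_1 \in c_1$. Since $\pt(\lambda x.e') = \lambda x.\pt(e')$ the transformed expression also reduces to a lambda; I would then use the IH on $c_1$ to transport witnesses from $c_1$ to $\pt(c_1)$ along $e_1 \mapsto \pt(e_1)$, and use the IH once more on the pair $(e\,e_1,\; c_2[e_1/x])$, relying on a standard substitution lemma $\pt(e'[e_1/x]) \equiv \pt(e')[\pt(e_1)/x]$ that can be proved by a routine induction on $e'$.

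The principal obstacle is the function-contract case when the codomain type is $\mathtt{STM\ b}$. In that sub-case the application clause of Fig.~\ref{fig:pt} wraps the transformed application in an eta-expansion, so $\pt(e\,e_1)$ equals $\lambda t.(\pt(e)\,\pt(e_1)\,t)$ rather than the $\pt(e)\,\pt(e_1)$ that the CS2 reasoning naturally produces. I would discharge this with a small auxiliary lemma showing that an eta-expanded STM-typed expression satisfies exactly the same contracts as its beta-reduced counterpart -- which follows by unfolding the semantics of the STM operation contract in Fig.~\ref{fig:contractsemantics} and observing that the eta-expansion is invisible to the reduction rules used in CS5. Together with the substitution lemma this completes the induction and yields the corollary.
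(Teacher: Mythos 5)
Your proof takes essentially the same route as the paper, which disposes of the corollary in one sentence: induction on the structure of the expression and corresponding contract, with the pure case (properties \eqref{prop:pure:equiv:e}--\eqref{prop:pure:equiv:c}) and the \texttt{STM a} case (the preceding theorem) as base cases. Your version is in fact more careful than the paper's sketch --- the substitution lemma $\pt(e'[e_1/x]) \equiv \pt(e')[\pt(e_1)/x]$ and the eta-expansion wrinkle in the application clause of the $\pt$-transformation are genuine details that the paper's ``easily proven'' leaves entirely implicit.
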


The above corollary basically states correctness of approach: no precision is lost by transforming a STM Haskell transaction into a pure function and using standard techniques for contract checking \cite{static-contract-checking} to prove its consistence with respect to the transactional invariant. The approach has been fully implemented and the concerned reader is invited to try the prototype of our framework\footnote{Our prototype can be  downloaded at the following URL: \\ \texttt{http://www.info.fundp.ac.be/$\mathtt{\sim}$rde}}.

\subsection{Limitations and extensions}
\label{discussion}

\subsubsection*{Blocking and Composable Transactions} 
STM Haskell also allows to define blocking and composable (alternatives) STM operations by means of the primitives \texttt{retry :: STM a} and \texttt{orElse :: STM a -> STM a -> STM a} \cite{composable-memory-transactions}.
The first one is an outstandingly simple way to force a thread to wait for an event. Semantically, \texttt{retry} just make the transaction abort, i.e. all changes are discarded, and the transaction is restarted from the beginning (in practice, it restarts only when a related TVar is modified, but this is only an implementation detail). For example, in the following call, the transaction will conceptually wait for the TVar \texttt{connected} to be \texttt{True} before continuing to the operation \texttt{sendMessage}.

\begin{footnotesize}
\begin{verbatim}
atomically ( do c <- readTVar connected
                case c of True  -> sendMessage
                          False -> retry )
\end{verbatim}  
\end{footnotesize}
There is no difficulty to deal with \texttt{retry} in our framework. It is sufficient to define $\pt(\mathtt{retry}) = \mathtt{UNR}$. 
Indeed, recall that we do not have to deal with diverging expressions, nor with the values of the \emph{TVar}s at the intermediate states of the transaction, but only with the possible values of the \emph{TVar}s at the very end of the transaction, which are not influenced by a \texttt{retry} branch.

The other primitive, \texttt{orElse}, allows multiple STM operations to be composed as alternatives. Basically, $e_1 \ \mathtt{`orElse`} \ e_2$ proceeds as follows: first, $e_1$ is evaluated. If its evaluation does \emph{not} result in \texttt{retry}, the operation ends with the result computed by $e_1$. If on the other hand, $e_1$ \emph{does} evaluate to \texttt{retry}, rather than restarting the operation, $e_2$ is evaluated. Only if the latter also result in \texttt{retry} is the entire operation restarted.

In order to verify a transaction $e$ containing an \texttt{orElse} operation, it suffices to verify multiple versions of the transaction, say $\Gamma(e)$, one version for each possible combination of alternative evaluations. The $\Gamma$-operator computing all possible such evaluations is partly defined below:
\smallskip

\begin{center}
\begin{math}
\begin{array}{rll}
\Gamma(e_1 \ \mathtt{`orElse`} \ e_2) & =&  \Gamma(e_1) \cup \Gamma(e_2) \smallskip \\
\Gamma(e_1 \bindTM e_2) &=& \{ e_1' \bindTM e_2' \ | \ e_1' \in \Gamma(e_1), e_2' \in \Gamma(e_2) \} \smallskip \\
\Gamma(\lambda x . e) & =&  \{ \lambda x . e' \ | \ e' \in \Gamma(e) \} \smallskip \\
& ... & \\
\end{array}
\end{math}
\end{center}
Instead of verifying $e \in c$, we have then to verify that $\forall e' \in \Gamma(e) : e' \in c$.

\subsubsection*{\emph{TVar}s as arguments of a function} 
One limitation of our language $\mathcal{H}$ is that, contrary to full STM Haskell, it does \emph{not} allow a function (or, more precisely, a lambda abstraction) to have \emph{TVar}s as its arguments.  
However, as long as we consider a fixed set of \emph{TVar}s manipulated by the program, this limitation can be overcome by using well-known techniques from program specialization~\cite{partial-evaluation-automatic,introduction-program-specialisation} in order to specialize both the function and its corresponding contract with respect to all possible subsets of \emph{TVar}s that can be provided as actual arguments in a call to the function.
Let us discuss this idea informally by looking at a simple example. Consider the following function, which takes an argument \texttt{x} supposed to be a TVar containing an integer, and which updates the given TVar by incrementing this value.

\begin{footnotesize}
\begin{verbatim}
f :: TVar Int -> STM ()
f x = do n <- readTVar x
         writeTVar x (n+1)
\end{verbatim}
\end{footnotesize}
Suppose that we want to express in a contract that this function must be called with respect to a TVar containing a positive value, and that the resulting value after the update must be strictly greater than the initial value. This could be expressed by a new kind of contract such as the following in which \texttt{t} and \texttt{t'} are used to refer to the initial, respectively, final value of the function's argument:

\begin{footnotesize}
\begin{verbatim}
f :: TVar[t,t'] -> | t >= 0 <> t' > t | Any 
\end{verbatim}
\end{footnotesize}
Supposing now that the program manipulates two \emph{TVar}s, say $\mathcal{V} = \{ \mathtt{tA} , \mathtt{tB} \}$, we can generate \emph{two} versions of the function and its associated contract: one explicitly referring to $\mathtt{tA}$, the other to $\mathtt{tB}$. Let us take, for example, the result of specializing w.r.t. to $\mathtt{tA}$ (the result for $\mathtt{tB}$ is of course similar):

\newcommand{\tun}{tA}
\newcommand{\tdeux}{tB}
\begin{footnotesize}
\begin{verbatim}
f_tA :: || { (tA,tB) | tA >= 0 } <> { (tA',tB') | tA' >= tA } || Any 
f_tA = do n <- readTVar tA
          writeTVar tA (n+1)
\end{verbatim}
\end{footnotesize}
Then, every call to \texttt{f tA} will be replaced by a call to \texttt{f\_tA}. The specialized function no longer contains a TVar as argument, and hence can be verified (w.r.t. to the specialized contract) by our standard framework.

Note that this solution works as long as the set of manipulated \emph{TVar}s is fixed (there are no dynamically created \emph{TVar}s), and are not aggregated into a (possibly recursive) data structure. Whether the technique can be adapted to these situations is an interesting topic for further research.

\section{Conclusion and Related Work}
\label{conclusion}


%

Data races being one of the most common sources of concurrency bugs, a lot of static \cite{survey-methods-preventing,effective-static-race,warlock-static-data,racerx-effective-static,kiss-keep-simple,race-checking-context,automated-type-based,locksmith-context-sensitive,type-based-race,ownership-types-safe,programming-model-concurrent} and dynamic \cite{literace-effective-sampling,eraser-dynamic-data,hybrid-dynamic-data,racetrack-efficient-detection,goldilocks-race-transaction,randomized-active-atomicity} analysis techniques have been developed to detect them \cite{optimal-tracing-replay} over the last couple of decades. 
The other categories of concurrency bugs discussed broadly in the literature are atomicity violation \cite{type-effect-system,avio-detecting-atomicity,atom-aid-detecting} and locking-related bugs \cite{eraser-dynamic-data}. 
However, very few attention has been given to more higher-level kind of bugs, such as those involving multiple variables, while studies have pointed out that they represent one third of the non-deadlock concurrency bugs found in real-world programs \cite{learning-mistakes-comprehensive}. 
Moreover, we are convinced that, when using higher-level synchronization techniques/concurrent languages -- like STM Haskell -- which allow to avoid problems related to atomicity or locks, the proportion of those bugs involving multiple variables is greatly increased.


Several frameworks try to tackle the problem of multiple variables \cite{muvi-automatically-inferring,colorsafe-architectural-support,finding-concurrency-bugs,high-level-data,detection-transactional-memory,practical-verification-high-level}.
In those works, static or dynamic techniques are used in order to infer possible correlations between shared variables. 
Then, they allow to detect races related to a set of correlated variables.
This idea is depicted by the concept of {\em high-level data race}, introduced by \cite{high-level-data}. 
Such a race occurs when a set of shared memory locations is meant to be accessed atomically, but the memory locations are accessed separately somewhere in the program \cite{detection-transactional-memory,practical-verification-high-level}. 
The interest of these tools is indisputable as they find a lot of races without requiring (a lot of) annotations, but they typically can trigger false positives and false negatives. 
Indeed, two shared variables can be accessed together but in a bad way while they can also be accessed separately without necessarily violating the link between them.
In other words, while finding the adequate granularity of the atomic region is necessary, the \emph{effect} of this atomic region on the data also matters.  

To overcome this limitation, we have to deal with the subtly variant concept of {\em application-level race condition}, which is introduced and handled (to the best of our knowledge) exclusively by \cite{verifying-correct-usage}. This notion not only captures the existence of a link between shared variables, but {\em explicitly} involves the {\em nature} of that link. By defining under what condition another thread can violate a program invariant, detection of application-level race conditions is more accurate, avoiding a lot of false positives and negatives. On the other hand, analyzes are typically heavier to use as the programmer has to provide additional annotations.    
Although the goal of our work is similar to the static analysis developed for verifying atomic blocks in an object-oriented language \cite{verifying-correct-usage}, the method to achieve that goal is very different. This is not not only because we need to address distinct language-related issues such as the particular control-flow inherent to higher-order functional programming in our case, or access permissions and unpacking methodologies related to object-oriented programming in \cite{verifying-correct-usage}. 
A more fundamental difference is that, in our framework, the consistency definition is written using \emph{the same language} as the program, while \cite{verifying-correct-usage} relies on a distinct formalism, \emph{typestate} \cite{typestates-for-objects}.
As a result, we feel our approach is both more convenient and expressive as we use full Haskell to express our invariants.

As we have basically shown how the problem of detecting application-level race conditions can be recasted in the setting of contract checking for non-concurrent programs \cite{static-contract-checking,sound-complete-models}, it is to be expected that recent improvements for rendering the verification of such contracts more practical \cite{halo-haskell-logic} will have a positive influence on the number and kind of races that can be detected by our approach.

Finally, let us note that consistency of STM in Haskell can also to some extent be checked dynamically, by using the language primitive \texttt{always} (previously called \texttt{check}) \cite{transactional-memory-data}. For our running example, we could write the following property that would be checked at the end of every transactions performed:

\begin{footnotesize}
\begin{verbatim}
always ( do tab <- readTVar shTab
            s <- readTVar shSum 
            return (sum tab == s) ) 
\end{verbatim}
\end{footnotesize}
While guaranteeing consistency during the program's execution, being a dynamic technique, it cannot be used to statically prove that the program is application-level race condition free. It would be interesting to see to what extent such dynamic verification can be coupled with static checking in order to improve the detection of such race conditions and/or perform a more detailed error reporting.

\paragraph{Acknowledgments.} 
We thank the anonymous reviewers for their constructive comments on a previous version of this paper.

\bibliography{fichier_bib_generique_a_utiliser_bien_classe2} 
\bibliographystyle{eptcs}


\end{document}